\definecolor{myblue}{RGB}{0, 70, 180}
\newcommand{\bE}{\ensuremath{\mathbb{E}}}
\newcommand{\bR}{\ensuremath{\mathbb{R}}}
\newcommand{\cD}{\ensuremath{\mathcal{D}}}
\newcommand{\cE}{\ensuremath{\mathcal{E}}}
\newcommand{\cP}{\ensuremath{\mathcal{P}}}
\newcommand{\cU}{\ensuremath{\mathcal{U}}}
\newcommand{\cW}{\ensuremath{\mathcal{W}}}
\newcommand{\cX}{\ensuremath{\mathcal{X}}}
\newcommand{\cY}{\ensuremath{\mathcal{Y}}}
\newcommand{\Ren}{R\'enyi }
\definecolor{kth-blue}{RGB}{25,84,166}
\definecolor{kth-pink}{RGB}{216,84,151}
\definecolor{kth-gray}{RGB}{101,101,108}
\DeclarePairedDelimiter\abs{\lvert}{\rvert}%
\DeclarePairedDelimiter\norm{\lVert}{\rVert}%
\begin{document}
\title{Evaluating Differential Privacy on Correlated Datasets Using Pointwise Maximal Leakage}
\titlerunning{Evaluating DP on Correlated Datasets Using PML}
%
\author{Sara~Saeidian \and Tobias~J.~Oechtering \and Mikael~Skoglund}
\authorrunning{S.~Saeidian et al.}
%
\institute{KTH Royal Institute of Technology, 100 44 Stockholm, Sweden\\
\email{\{saeidian,oech,skoglund\}@kth.se}}
\maketitle              
\begin{abstract}
Data-driven advancements significantly contribute to societal progress, yet they also pose substantial risks to privacy. In this landscape, \emph{differential privacy} (DP) has become a cornerstone in privacy preservation efforts. However, the adequacy of DP in scenarios involving correlated datasets has sometimes been questioned and multiple studies have hinted at potential vulnerabilities. In this work, we delve into the nuances of applying DP to correlated datasets by leveraging the concept of \emph{pointwise maximal leakage} (PML) for a quantitative assessment of information leakage. Our investigation reveals that DP's guarantees can be arbitrarily weak for correlated databases when assessed through the lens of PML. More precisely, we prove the existence of a pure DP mechanism with PML levels arbitrarily close to that of a mechanism which releases individual entries from a database without any perturbation. By shedding light on the limitations of DP on correlated datasets, our work aims to foster a deeper understanding of subtle privacy risks and highlight the need for the development of more effective privacy-preserving mechanisms tailored to diverse scenarios.

\keywords{Pointwise maximal leakage  \and Differential privacy \and Correlated data.}
\end{abstract}
\section{Introduction}
\label{sec:intro}
In today's data-driven landscape, private and public organizations increasingly rely on data collected from individuals for decision-making and to enhance service provision. While insights obtained from data undoubtedly offer value to societies, it is essential not to overlook the risks to \emph{privacy} that come along with it. As a result, extensive research over the past decades has led to the emergence of various privacy definitions and frameworks.

Among the various definitions proposed, \emph{differential privacy} (DP)\cite{dworkDifferentialPrivacy,dworkCalibratingNoiseSensitivity} stands out as the most widely accepted framework for understanding and enforcing privacy. DP has been adopted by both public agencies, such as the U.S. Census Bureau~\cite{abowd2018us}, and major industry players like Apple~\cite{thakurta2017learning}, Google~\cite{erlingsson2014rappor}, and Microsoft~\cite{ding2017collecting}. DP assumes that data collected from individuals is stored in a database that returns answers to queries in a privacy-preserving manner. Its objective is to reveal population-level insights about the data while preserving the privacy of each individual. Specifically, DP ensures that two databases differing in a single entry, presumably information pertaining to a single individual, cannot be distinguished based on their corresponding query responses. This approach aligns with the fundamental idea that \say{nothing should be learnable about an individual participating in a database that could not be learned without participation}~\cite{dworkDifficultiesDisclosurePrevention2010a}. 

Despite its widespread success, several studies have raised concerns that DP may not provide sufficient protection for databases containing \emph{correlated} data~\cite{kiferNoFreeLunch2011,cormode2011personal,he2014blowfish,liu2016dependence,yang2015bayesian,liMembershipPrivacyUnifying2013,zhu2014correlated}. Informally, this is because there may be no one-to-one mapping between individuals and entries in the database, and each person's information may contribute to multiple entries. To illustrate this issue, consider the following example from~\cite{kiferNoFreeLunch2011}. 

\begin{example}
\label{ex:bob}
Suppose Bob is part of a medical database where his sensitive attribute can take one of the values $1, \ldots, k$. Assume the database is sampled from a distribution such that when Bob's sensitive attribute is $j$, there are $j \times 10,000$ cancer patients in the data. Suppose an adversary queries the database about the number of cancer patients. Let $\mathrm{Lap}(b)$ denote the zero mean Laplace distribution with scale parameter $b>0$ (i.e., variance $2b^2$). To answer this query while satisfying $0.1$-DP, the mechanism returning the response adds $\mathrm{Lap}(10)$ noise to the true count and releases the result (see Definitions~\ref{def:dp} and~\ref{def:laplace_mech} in Section~\ref{sec:background}). However, in this case, the attacker can infer Bob's sensitive attribute with high probability by dividing the noisy answer by 10,000 and rounding to the nearest integer $j$.
\end{example}

Example~\ref{ex:bob} and similar ones underscore the necessity for privacy definitions that take into account the data-generating distribution. Consequently, several privacy frameworks have emerged to address this concern, including Pufferfish privacy~\cite{kiferPufferfishFrameworkMathematical2014}, membership privacy~\cite{liMembershipPrivacyUnifying2013}, Bayesian differential privacy~\cite{yang2015bayesian}, and coupled-worlds privacy~\cite{bassily2013coupled}, to give a few examples. Among these distribution-dependent frameworks, one that is particularly promising is based on a recent notion of information leakage called \emph{pointwise maximal leakage} (PML)~\cite{saeidian2023pointwise_it,saeidian2023pointwise_isit}. 

At its core, PML is an information measure that quantifies the amount of information leaking about a secret to a publicly available and correlated quantity. What sets PML apart is its strong operational meaning in the context of privacy. Specifically, PML is derived by assessing risks posed by adversaries in highly general threat models (see Section~\ref{sec:background} for details). Moreover, PML exhibits remarkable robustness by considering a wide range of adversaries, and flexibility in its application to various data types. Another noteworthy aspect of PML is that its guarantees and privacy parameters are easily interpretable. Informally speaking, it was shown in \cite{inferential_privacy} that enforcing privacy according to PML aligns with the fundamental principle that \say{nothing should be learnable about the secret that could not be learned from its distribution alone.} Therefore, on an abstract level, the objectives of PML parallel those of differential privacy since PML aims to reveal population-level insights about the data while concealing its intricate details. Furthermore, PML is suitable for guaranteeing privacy in complex data-processing systems through various inequalities that it satisfies, including pre-processing, post-processing, and composition inequalities, among others~\cite[Lemma 1]{saeidian2023pointwise_it}.

Interestingly, while PML is not a generalization or relaxation of differential privacy, connections have been established between the two frameworks.  More precisely, it was shown in~\cite[Thm. 4.2]{inferential_privacy} that when our goal is to protect a database containing independent entries, then differential privacy is equivalent to restricting the PML of each entry in the database across all possible outcomes of a mechanism. This result provides deep insights for protecting databases containing independent entries. However, it also prompts the question: \emph{What is the relationship between PML and differential privacy in scenarios where the entries in a database are correlated?}

\subsection{Contributions and Outline}
In this work, we establish that in scenarios where the entries in a database are correlated, the PML guarantees of mechanisms satisfying pure DP can be arbitrarily weak. Specifically, we prove that there exists a pure DP mechanism with PML levels arbitrarily close to that of a mechanism which releases individual entries from a database without any perturbation. The significance of this result lies in its quantitative nature. In particular, we rely on analytical arguments to demonstrate that DP may not be a suitable framework for protecting correlated databases, in contrast to its performance in scenarios involving independent entries. Our analysis based on PML also distinguishes our work from previous studies, which often rely on intuition, qualitative examples~\cite{kifer2012rigorous,yang2015bayesian,zhu2014correlated}, or experimental evidence~\cite{cormode2011personal,liu2016dependence} to illustrate the weakness of DP in the presence of correlations. Overall, our discussions aim to raise awareness in order to prevent creating a false sense of security that results from applying privacy mechanisms indiscriminately. 

The remainder of this paper is organized as follows: Section~\ref{sec:background} presents preliminaries, including definitions of pure differential privacy and pointwise maximal leakage, highlighting their connections and distinctions. In Section~\ref{sec:main}, we demonstrate that in scenarios involving correlated databases, mechanisms satisfying pure differential privacy can be weak when evaluated through the lens of PML. Section~\ref{sec:discussion} contains a brief discussion about our result and some concluding remarks.

\section{Preliminaries}
\label{sec:background}
\subsection{Notation}
We use uppercase letters to denote random variables, lower case letters to denote realizations, and calligraphic letters to denote sets. In particular, we reserve $X$ for representing some data containing sensitive information, e.g., a dataset containing information about individuals. For simplicity, we assume that the alphabet of $X$, denoted by $\cX$, is a finite set. With a slight abuse of notation, we use $P_X$ to describe both a distribution for $X$ and its corresponding probability mass function. Moreover, $\cP_\cX$ denotes the set of all distributions with full support on $\cX$. 

Let $Y$ be a random variable representing some information released about $X$, taking values on a set $\cY$. The random variable $Y$ is induced by a \emph{mechanism}, i.e., a conditional probability distribution $P_{Y \mid X}$. Essentially, $Y$ represents the answer to a query posed about $X$. The set $\cY$ can be finite (e.g., when $P_{Y \mid X}$ is the randomized response mechanism~\cite{warner1965randomized}) or infinite (e.g., when $P_{Y \mid X}$ is the Laplace mechanism~\cite{dworkCalibratingNoiseSensitivity}). With a slight abuse of notation, we use $P_{Y \mid X}$ to denote both the conditional distribution of $Y$ given $X$ as well as its density with respect to a suitable dominating measure, e.g., the counting measure or the Lebesgue measure.  

Let $P_{XY}$ denote the joint distribution of $X$ and $Y$. We write $P_{XY} = P_{Y \mid X} \times P_X$ to imply that $P_{XY}(x,y) = P_{Y \mid X =x}(y) P_X(x)$ for all $(x,y) \in \mathcal X \times \mathcal Y$. Furthermore, we write $P_{Y} = P_{Y \mid X} \circ P_X$ to represent marginalization over $X$, i.e., to imply that ${P}_{Y}(y) = \sum_{x \in \mathcal X}  P_{Y \mid X=x}(y)  P_X(x)$ for all $y \in \mathcal Y$. 

We say that the Markov chain $U - X - Y$ holds if random variables $U$ and $Y$ are conditionally independent given $X$, that is, if $P_{UY \mid X} = P_{U \mid X} \times P_{Y \mid X}$. The Markov chain $U - X - Y$ implies that  $Y$ depends on $U$ only through $X$ and vice versa. We may think of a $U$ satisfying the Markov chain $U-X-Y$ as either a feature of $X$ or a (randomized) function of $X$. 

Finally, $[n] \coloneqq \{1, \ldots, n\}$ describes the set of all positive integers smaller than or equal to $n$, and $\log(\cdot)$ denotes the natural logarithm.

\subsection{Differential Privacy}
Often called the gold standard of privacy, differential privacy (DP)~\cite{dworkCalibratingNoiseSensitivity,dwork2014algorithmic} stands as the most widely adopted privacy framework, both in theoretical developments as well as real-world deployments. Conceptually, DP considers scenarios where individuals' data is aggregated into a database, with the aim of responding to queries posed to the database without compromising the privacy of any individual contributor. To achieve this goal, DP guarantees that two databases that differ in only one entry (referred to as \say{neighboring} databases) produce query responses that are hard to distinguish. 

Over nearly two decades of extensive research has led to the development of various DP variants and adaptations, such as approximate DP~\cite{dworkOurDataOurselves2006a}, concentrated DP~\cite{bun2016concentrated,dwork2016concentrated}, Rényi DP~\cite{mironov2017renyi}, and Gaussian DP~\cite{dong2022gaussian}, among others. However, in this paper, our focus is on the original, and arguably the strongest, form of DP, known as \say{pure} DP~\cite{dworkCalibratingNoiseSensitivity}.

Let $X = (D_1, \ldots, D_n)$ be a database containing $n$ entries. Given $i \in [n]$, $D_i$ represents the $i$-th entry, which takes values on a finite set $\mathcal D$ and $D_{-i} = (D_1, \ldots, D_{i-1}, D_{i+1}, \ldots, D_n)$ represents the database with its $i$-th entry removed. Let $P_X = P_{D_1, \ldots, D_n}$ be the distribution according to which databases are drawn from $\cX = \mathcal D^n$. To obtain the distribution of the $i$-th entry, we marginalize over the remaining $n-1$ entries, that is, for each $d_i \in \mathcal D$ and $i \in [n]$ we have
\begin{align}
\begin{split}
\label{eq:marginal}
    P_{D_i}(d_i) &= \sum_{d_{-i} \in \mathcal D^{n-1}} P_X(d_i, d_{-i})\\
    &= \sum_{d_{-i} \in \mathcal D^{n-1}} P_{D_i \mid D_{-i} = d_{-i}}(d_i) \; P_{D_{-i}} (d_{-i}),
\end{split}
\end{align}
where $d_{-i} \coloneqq (d_1, \ldots, d_{i-1}, d_{i+1}, \ldots, d_n) \in \mathcal D^{n-1}$ is a tuple describing the database with its $i$-th entry removed. Note that this setup is very general since we make no independence assumptions and the entries can be arbitrarily correlated. 

Suppose an analyst poses a query to the database, with the answer returned by the mechanism $P_{Y \mid X}$.

\begin{definition}[Differential privacy]
\label{def:dp}
Given $\varepsilon \geq 0$, the mechanism $P_{Y \mid X}$ satisfies $\varepsilon$-differential privacy if 
\begin{equation*}
    \max_{\substack{d_i, d_i' \in \mathcal D: \\ i \in [n]}} \; \max_{d_{-i} \in \mathcal D^{n-1}} \log \frac{P_{Y \mid D_i=d_i, D_{-i} = d_{-i}}(\cE)}{P_{Y \mid D_i=d_i', D_{-i} = d_{-i}}(\cE)} \leq \varepsilon,
\end{equation*}
for all measurable events $\cE \subseteq \cY$. 
\end{definition}

Importantly, Definition~\ref{def:dp} is agnostic to the distribution of the database, $P_X$. Therefore, differential privacy is a property of the mechanism $P_{Y \mid X}$ alone. This observation has led to the widespread belief that DP offers strong privacy guarantees for individuals in a database irrespective of the distribution $P_X$. However, in Section 3 we will present a quantitative example to demonstrate that a differentially private mechanism can in fact leak a large amount of information about individual entries in scenarios where the entries are correlated. 

Next, let us recall one of the most commonly employed differentially private mechanisms, namely the \emph{Laplace mechanism}~\cite{dworkCalibratingNoiseSensitivity}. The Laplace mechanism is often used to answer numerical queries with bounded $\ell_1$-\emph{sensitivity}. Let $\mathrm{Lap}(\mu, b)$ denote the Laplace distribution with mean $\mu \in \bR$ and scale parameter $b>0$ (i.e., variance $2b^2$). 

\begin{definition}[Laplace mechanism]
\label{def:laplace_mech}
Let $f : \cX \to \bR$ be a query with $\ell_1$-sensitivity 
\begin{equation*}
    \Delta_1(f) \coloneqq \sup_{x_1, x_2 \in \cX : x_1 \sim x_2} \abs{f(x_1) - f(x_2)}.
\end{equation*}
Given $\varepsilon > 0$ the Laplace mechanism returns a query response according to the distribution $Y \mid X=x \sim \mathrm{Lap}\left(f(x), \frac{\Delta_1(f)}{\varepsilon} \right)$, where $x \in \cX$. 
\end{definition}

It has been shown in~\cite{dworkCalibratingNoiseSensitivity} that the Laplace mechanism satisfies $\varepsilon$-DP.

The $\ell_1$-sensitivity of a query $f$ describes the largest change in $f(x)$ upon altering the value of one entry in database $x$. The Laplace mechanism then computes $f$ and perturbs it with Laplace noise scaled according to $\Delta_1(f)$ and $\varepsilon$. Examples of queries that can be answered via the Laplace mechanism include \emph{counting queries}, that is, queries of the form \say{How many entries in the database satisfy property $A$?,} and \emph{histogram queries}~\cite{dwork2014algorithmic}.

\subsection{Pointwise Maximal Leakage}


Pointwise maximal leakage (PML)~\cite{saeidian2023pointwise_it,saeidian2023pointwise_isit} is a recently introduced privacy measure that enjoys a strong operational meaning and robustness. It measures the amount of information leaking about a secret $X$ to the outcomes of a mechanism $P_{Y \mid X}$. PML is obtained by evaluating the risk posed by adversaries in two highly versatile threat models: the \emph{randomized functions} model~\cite{issaOperationalApproachInformation2020} and the \emph{gain function} model of leakage~\cite{alvim2012measuring}.

Below, we define PML using both models.
\begin{definition}[Randomized function view of PML~{\cite[Def. 1]{saeidian2023pointwise_it}}]
\label{def:randomized_function_view}
Suppose $X$ is a random variable on the finite set $\cX$, and $Y$ is a random variable on a set $\cY$ induced by a mechanism $P_{Y \mid X}$. 
According to the randomized function view, the pointwise maximal leakage from $X$ to $y \in \cY$ is defined as
\begin{equation}
\label{eq:pml_u_sup}
    \ell(X\to y) \coloneqq \log \sup_{U:U-X-Y} \frac{\sup_{P_{\hat U \mid Y}} \mathbb P \left(U=\hat U \mid Y=y \right)}{\max_{u \in \cU} P_U(u)},
\end{equation}
where $U$ and $\hat U$ are random variables on a finite set $\cU$, $P_{U} = P_{U \mid X} \circ P_X$, and the Markov chain $U-X-Y-\hat U$ holds.
\end{definition}

Definition~\ref{def:randomized_function_view} can be understood as follows. Let $U$ be a randomized function of $X$. For instance, when $X$ is a database, $U$ can represent a single entry or a subset of the entries in $X$. To quantify the amount of information leakage associated with a single released outcome of the mechanism, denoted by $y$, in \eqref{eq:pml_u_sup} we compare the probability of correctly guessing the value of $U$ after observing $y$ in the numerator of~\eqref{eq:pml_u_sup} with the \emph{a priori} probability of correctly guessing the value of $U$ in the denominator. The probability of correctly guessing $U$ after observing $y$ is assessed by assuming that the adversary uses the best guessing kernel $P_{\hat U \mid Y}$, represented by the supremum over $P_{\hat U \mid Y}$ in the numerator of~\eqref{eq:pml_u_sup}. Similarly, the prior probability of correctly guessing $U$ is $\max_{u \in \mathcal{U}} P_U(u)$. Crucially, Definition~\ref{def:randomized_function_view} results in a highly robust measure of privacy since the posterior-to-prior ratio is maximized over all possible randomized functions of $X$, represented by the supremum over all $U$'s satisfying the Markov chain $U-X-Y$. This makes PML particularly useful when we do not know what feature of $X$ an adversary is interested to guess, or different adversaries may be interested in different features of $X$. 

Next, we define PML using the gain function model. 
\begin{definition}[Gain function view of PML~{\cite[Cor. 1]{saeidian2023pointwise_it}}]
\label{def:gain_function_view}
Suppose $X$ is a random variable on the finite set $\cX$, and $Y$ is a random variable on a set $\cY$ induced by a mechanism $P_{Y \mid X}$. 
According to the gain function view, the pointwise maximal leakage from $X$ to $y \in \cY$ is defined as
\begin{equation}
\label{eq:g-leakage}
    \ell(X\to y) \coloneqq \log \; \sup_g \; \frac{\sup_{P_{W \mid Y}} \mathbb{E} \left[g(X,W) \mid Y=y \right]}{\max_{w \in \cW} \bE\left[g(X, w)\right]},
\end{equation}
where the supremum is over all non-negative gain functions $g: \cX \times \cW \to \bR_+$ with a finite range, and $\cW$ is a finite set.  
\end{definition}

Definition~\ref{def:gain_function_view} can be understood as follows. Consider an adversary whose objective is to construct a guess of $X$, denoted by $W$, in order to maximize the expected value of a non-negative gain function $g$. The gain function $g$ captures the adversary's objective and can be tailored to model a wide array of privacy attacks. For example, when $X$ is a database, $g$ can model \emph{membership inference} attacks or \emph{reconstruction} attacks~\cite{dwork2017exposed} (see \cite{saeidian2023pointwise_it} for concrete examples of gain functions). To quantify the amount of information leakage associated with a single released outcome $y$, we compare the expected value of $g$ after observing $y$ in the numerator of~\eqref{eq:g-leakage} with the prior expected value of $g$ in the denominator. The posterior expected gain is assessed using the best kernel $P_{W \mid Y}$, represented by the supremum over $P_{W \mid Y}$ in the numerator of~\eqref{eq:g-leakage}. Similarly, the prior expected gain is $\max_{w \in \cW} \bE\left[g(X, w)\right]$. Then, to obtain a privacy measure robust to different types of attacks, the posterior-to-prior ratio of the expected gain is maximized over all possible non-negative $g$'s with a finite range. 

While Definitions \ref{def:randomized_function_view} and \ref{def:gain_function_view} offer different approaches to defining PML, we showed in \cite[Thm. 2]{saeidian2023pointwise_it} that, in fact, they are mathematically equivalent. Moreover, both definitions can be simplified to the following concise expression.
\begin{theorem}[{\cite[Thm. 1]{saeidian2023pointwise_it}}]
\label{thm:pml}
Let $P_{XY}$ be a distribution on the set $\mathcal X \times \mathcal Y$ with the marginal distribution $P_X \in \cP_\cX$ for $X$. The pointwise maximal leakage from $X$ to $y \in \mathcal Y$ is\footnote{We use the convention that $P_{X \mid Y=y} = P_X$ if $P_Y(y) =0$. That is, conditioning on outcomes with density zero equals no conditioning.}
\begin{equation}
\label{eq:pml_simple}
    \ell(X \to y) = D_\infty(P_{X \mid Y=y} \Vert P_X),
\end{equation}
where $P_{X \mid Y=y}$ denotes the posterior distribution of $X$ given $y \in \mathcal Y$, and
\begin{align*}
    D_\infty(P_{X \mid Y=y} \Vert P_X) &= \log \; \max_{x \in \cX} \; \frac{P_{X \mid Y=y}(x)}{P_X(x)}\\
    &= \log \; \max_{x \in \cX} \; \frac{P_{Y \mid X=x}(y)}{P_Y(y)},
\end{align*}
denotes the \Ren divergence of order infinity~\cite{renyi1961measures} of $P_{X \mid Y=y}$ from $P_X$.  
\end{theorem}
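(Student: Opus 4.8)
The plan is to establish the identity directly from the randomized-function view, Definition~\ref{def:randomized_function_view}; the gain-function form then yields the same number by the equivalence in~\cite[Thm.~2]{saeidian2023pointwise_it} (alternatively one runs a parallel argument on Definition~\ref{def:gain_function_view}, which is in fact slightly easier on the lower-bound side, see below). First I would dispose of the degenerate case $P_Y(y)=0$, where both sides vanish by the stated conventions, and observe that for $P_Y(y)>0$ Bayes' rule gives $P_{X\mid Y=y}(x)/P_X(x)=P_{Y\mid X=x}(y)/P_Y(y)$ for every $x\in\cX$, so the two displayed forms of $D_\infty(P_{X\mid Y=y}\Vert P_X)$ agree term by term; it therefore suffices to prove $\ell(X\to y)=\log R$, where $R\coloneqq\max_{x\in\cX}P_{X\mid Y=y}(x)/P_X(x)$. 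A preliminary simplification of~\eqref{eq:pml_u_sup}: for a fixed $U$ the inner supremum over guessing kernels is attained by the maximum-a-posteriori guess (using $\hat U\perp U\mid Y$ from the Markov chain $U-X-Y-\hat U$), so $\sup_{P_{\hat U\mid Y}}\mathbb P(U=\hat U\mid Y=y)=\max_{u}P_{U\mid Y=y}(u)$ and $\ell(X\to y)=\log\sup_{U:U-X-Y}\bigl(\max_u P_{U\mid Y=y}(u)\bigr)/\bigl(\max_u P_U(u)\bigr)$.

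For the upper bound $\ell(X\to y)\le\log R$, I would fix an arbitrary $U$ with $U-X-Y$ and use the Markov property $P_{U\mid X,Y}=P_{U\mid X}$ to write, for each $u\in\cU$,
\begin{align*}
P_{U\mid Y=y}(u)=\sum_{x\in\cX}P_{U\mid X=x}(u)\,P_{X\mid Y=y}(x)=\sum_{x\in\cX}\bigl(P_{U\mid X=x}(u)\,P_X(x)\bigr)\,\frac{P_{X\mid Y=y}(x)}{P_X(x)}\le R\,P_U(u),
\end{align*}
since the bracketed weights are non-negative and sum over $x$ to $P_U(u)$. Taking $\max_u$ on both sides gives $\max_u P_{U\mid Y=y}(u)\le R\max_u P_U(u)$, and as $U$ was arbitrary this is the claimed bound.

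For the matching lower bound I would exhibit a single good $U$. Let $x^\star$ attain the maximum defining $R$; note $R\ge1$ (both marginals are probability vectors) and $P_X(x^\star)>0$ (full support), hence $P_{X\mid Y=y}(x^\star)=R\,P_X(x^\star)>0$. For a large integer $m$ take $\cU=\{0\}\cup[m]$ and let $U$ be the randomized function of $X$ that outputs $0$ whenever $X=x^\star$ and an independent uniform draw from $[m]$ whenever $X\ne x^\star$; then $U-X-Y$ holds, $P_U(0)=P_X(x^\star)$ and $P_U(j)=(1-P_X(x^\star))/m$ for $j\in[m]$, with the analogous formulas for $P_{U\mid Y=y}$. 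Choosing $m$ large enough that $(1-P_X(x^\star))/m$ is below both $P_X(x^\star)$ and $P_{X\mid Y=y}(x^\star)$ forces $u=0$ to be the mode of both $P_U$ and $P_{U\mid Y=y}$, so for this $U$ the ratio equals exactly $P_{X\mid Y=y}(x^\star)/P_X(x^\star)=R$; hence $\ell(X\to y)\ge\log R$, and together with the upper bound we conclude $\ell(X\to y)=\log R=D_\infty(P_{X\mid Y=y}\Vert P_X)$.

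\textbf{Where the difficulty lies.} Both inequalities rest on the same one-line fact, that conditioning on $y$ rescales the coordinates of $P_X$ by a likelihood ratio bounded above by $R$; the only genuinely delicate step is the lower-bound construction. The naive candidate $U=\mathbf 1[X=x^\star]$ does \emph{not} suffice in general, because the value encoding $x^\star$ need not be the mode of the posterior (or the prior) unless $P_{X\mid Y=y}(x^\star)\ge\tfrac12$; the padding set $[m]$ with $m$ large is precisely what dilutes the complementary event so that the distinguished coordinate becomes the mode on both sides and the ratio is pinned to $R$. Running the argument on the gain-function view sidesteps this subtlety entirely: one takes $\cW$ a singleton and $g(x,w)=\mathbf 1[x=x^\star]$, so the posterior-to-prior ratio of the expected gain is $P_{X\mid Y=y}(x^\star)/P_X(x^\star)=R$, while the upper bound is the same averaging estimate applied to $\sum_{x}g(x,w)\,P_{X\mid Y=y}(x)$; this furnishes an alternative self-contained route to the theorem.
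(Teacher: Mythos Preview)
The paper does not contain a proof of this theorem: it is quoted from \cite[Thm.~1]{saeidian2023pointwise_it} as an imported result, so there is nothing in the present paper to compare your argument against. That said, your proof is correct and self-contained. The upper bound via the likelihood-ratio averaging $P_{U\mid Y=y}(u)=\sum_x P_{U\mid X=x}(u)P_X(x)\cdot\frac{P_{X\mid Y=y}(x)}{P_X(x)}\le R\,P_U(u)$ is the natural one-line estimate, and your padded indicator construction for the lower bound correctly addresses the mode-alignment issue that the naive binary indicator $U=\mathbf 1[X=x^\star]$ can fail on; the alternative gain-function route with $g(x,w)=\mathbf 1[x=x^\star]$ and $\cW$ a singleton is also valid and indeed cleaner. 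Your handling of the degenerate case $P_Y(y)=0$ and the Bayes-rule identification of the two displayed forms of $D_\infty$ are fine.
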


In addition to its strong operational meaning and robustness, PML satisfies several useful properties that render it suitable for deployment in complex data-processing systems. Notably, PML satisfies a pre-processing inequality, a post-processing inequality, and increases (at most) linearly under composition~\cite[Lemma 1]{saeidian2023pointwise_it}. Furthermore, as evident from~\eqref{eq:pml_simple}, PML is non-negative and satisfies the bound
\begin{equation}
\label{eq:pml_bounds}
    \ell(X \to y) \leq \log \; \frac{1}{\min\limits_{x \in \cX} P_X(x)},
\end{equation}
for all mechanisms and all $y \in \cY$. The right hand side of the above inequality essentially describes the maximum amount of information that can leak about $X$ through any mechanism. In other words, it represents the largest PML across all outcomes of a mechanism that releases $X$ without perturbing it.

\subsection{Differential Privacy as a PML Constraint}
In general, PML and DP offer fundamentally distinct approaches to privacy and differ in several key aspects. PML quantifies the amount of information leaked to an outcome of a privacy mechanism and the secret $X$ may encompass various types of sensitive data. For example, $X$ can be a password, an individual's medical records, or an entire database. In contrast, DP was specifically formulated to protect private databases. More importantly, PML depends on both the mechanism $P_{Y \mid X}$ and the data-generating distribution $P_X$. Consequently, a mechanism that leaks little information leakage under one distribution may leak a lot of information under another distribution. Conversely, DP depends only on the mechanism. 

Despite their differences, in~\cite[Thm. 4.2]{inferential_privacy}, we established that when $X$ is a database containing independent entries, DP is equivalent to restricting the amount of information leaked about each entry across all outcomes of a mechanism. Let $\cX = \cD^n$ denote the set of all possible databases, and $\mathcal Q_\mathcal X$ denote the set of product distributions in $\mathcal P_{\mathcal X}$ defined as $\mathcal Q_\mathcal X \coloneqq \{P_{X} \in \mathcal P_{\mathcal X} \colon P_{X} = \prod_{i=1}^n P_{D_i} \}$.

\begin{theorem}[{\cite[Thm. 4.2]{inferential_privacy}}]
\label{thm:dp_pml}
Given $\varepsilon \geq 0$, the mechanism $P_{Y \mid X}$ satisfies $\varepsilon$-differential privacy if and only if 
\begin{equation*}
    \sup_{P_X \in \mathcal Q_\mathcal X} \; \sup_{y \in \mathcal Y} \; \max_{i \in [n]} \; \ell(D_i \to y) \leq \varepsilon. 
\end{equation*}
\end{theorem}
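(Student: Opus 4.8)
The first step is to recast the claimed bound in a form resembling Definition~\ref{def:dp}. Fix $P_X\in\mathcal Q_\mathcal X$ and $i\in[n]$; by Theorem~\ref{thm:pml},
\[
\ell(D_i\to y)=\log\max_{d_i\in\mathcal D}\frac{P_{Y\mid D_i=d_i}(y)}{P_Y(y)},
\]
where $P_{Y\mid D_i=d_i}(\cdot)=\sum_{d_{-i}\in\mathcal D^{n-1}}P_{Y\mid X=(d_i,d_{-i})}(\cdot)\,P_{D_{-i}\mid D_i=d_i}(d_{-i})$ and $P_Y=P_{Y\mid X}\circ P_X$, viewed as densities with respect to a common dominating measure. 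Integrating this pointwise density inequality over events (and, conversely, differentiating), the condition $\sup_{y}\max_i\ell(D_i\to y)\le\varepsilon$ holds for a given $P_X$ if and only if $P_{Y\mid D_i=d_i}(\mathcal E)\le e^{\varepsilon}P_Y(\mathcal E)$ for all $i\in[n]$, $d_i\in\mathcal D$, and measurable $\mathcal E\subseteq\mathcal Y$. I will prove the theorem in this equivalent event form, and I may assume $\abs{\mathcal D}\ge 2$ since otherwise both conditions are vacuous.

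For the ``only if'' direction, assume $P_{Y\mid X}$ satisfies $\varepsilon$-DP and fix $P_X=\prod_{j}P_{D_j}\in\mathcal Q_\mathcal X$, an index $i$, a value $d_i$, and a measurable $\mathcal E$. For each fixed $d_{-i}$, applying $\varepsilon$-DP to the $i$-th coordinate gives $P_{Y\mid X=(d_i,d_{-i})}(\mathcal E)\le e^{\varepsilon}P_{Y\mid X=(d_i',d_{-i})}(\mathcal E)$ for every $d_i'$; averaging over $d_i'$ with weights $P_{D_i}(d_i')$ and using $P_{D_i\mid D_{-i}}=P_{D_i}$ yields $P_{Y\mid X=(d_i,d_{-i})}(\mathcal E)\le e^{\varepsilon}P_{Y\mid D_{-i}=d_{-i}}(\mathcal E)$. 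Averaging over $d_{-i}$ with weights $P_{D_{-i}}(d_{-i})$ (using $P_{D_{-i}\mid D_i}=P_{D_{-i}}$ on the left and $P_Y=P_{Y\mid D_{-i}}\circ P_{D_{-i}}$ on the right) then gives $P_{Y\mid D_i=d_i}(\mathcal E)\le e^{\varepsilon}P_Y(\mathcal E)$, which is the event form above; since $P_X$ and $i$ were arbitrary the PML bound follows.

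For the ``if'' direction, suppose the event inequality holds for every $P_X\in\mathcal Q_\mathcal X$. Fix $i\in[n]$, distinct $d_i,d_i'\in\mathcal D$ (the case $d_i=d_i'$ being trivial), $d_{-i}\in\mathcal D^{n-1}$, and a measurable $\mathcal E$; the goal is $P_{Y\mid X=(d_i,d_{-i})}(\mathcal E)\le e^{\varepsilon}P_{Y\mid X=(d_i',d_{-i})}(\mathcal E)$. For $\delta\in(0,1/\abs{\mathcal D})$, build a product distribution $P_X^{(\delta)}=\prod_j P_{D_j}^{(\delta)}\in\mathcal Q_\mathcal X$ in which, for $j\ne i$, $P_{D_j}^{(\delta)}$ assigns mass $1-(\abs{\mathcal D}-1)\delta$ to the $j$-th coordinate of $d_{-i}$ and mass $\delta$ to every other value, while $P_{D_i}^{(\delta)}$ assigns mass $1-(\abs{\mathcal D}-1)\delta$ to $d_i'$ and mass $\delta$ to every other value (so $d_i$ retains positive mass and full support holds). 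The hypothesis gives $P_{Y\mid D_i=d_i}^{(\delta)}(\mathcal E)\le e^{\varepsilon}P_Y^{(\delta)}(\mathcal E)$. Expanding both sides as finite convex combinations of the numbers $P_{Y\mid X=x}(\mathcal E)\in[0,1]$ and letting $\delta\to 0$, the weight concentrates on $x=(d_i,d_{-i})$ on the left and on $x=(d_i',d_{-i})$ on the right, so by bounded convergence the left side tends to $P_{Y\mid X=(d_i,d_{-i})}(\mathcal E)$ and the right side to $e^{\varepsilon}P_{Y\mid X=(d_i',d_{-i})}(\mathcal E)$. This yields the desired inequality, and since all parameters were arbitrary, $P_{Y\mid X}$ satisfies $\varepsilon$-DP.

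The product-structure manipulations in the first two steps are routine bookkeeping. The step needing genuine care is the ``if'' direction: the full-support constraint baked into $\mathcal Q_\mathcal X$ rules out point-mass product distributions, so the conditioning events $\{X=(d_i,d_{-i})\}$ and $\{X=(d_i',d_{-i})\}$ can only be approached in the limit, which is why the approximating family $P_X^{(\delta)}$ and the bounded-convergence argument are required. A minor technical point is the translation between the pointwise-in-$y$ leakage $\ell(D_i\to y)$ and the event-based definition of DP, handled by the integration remark in the first paragraph.
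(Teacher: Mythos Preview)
The paper does not supply a proof of this theorem; it is quoted as \cite[Thm.~4.2]{inferential_privacy} and used as a black box. Your argument is correct and is the natural one: the ``only if'' direction is a two-step averaging (first over $d_i'$, then over $d_{-i}$) exploiting the product structure, and the ``if'' direction recovers the pointwise DP inequality by concentrating a full-support product distribution near the degenerate one supported on the target neighboring pair and passing to the limit. The limiting construction is exactly the right way to handle the full-support requirement in $\mathcal Q_\mathcal X$, and your bounded-convergence justification is adequate since the sums are finite. The density-versus-event translation you flag at the end is indeed only a minor technicality here; if desired, the ``only if'' averaging can be carried out directly at the density level (the mixtures are finite, so pointwise inequalities are preserved), which avoids the differentiation step altogether.
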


Theorem~\ref{thm:dp_pml} demonstrates that when the entries in a database are independent, DP is adequate for ensuring privacy in the sense of PML. However, it also raises the possibility that when the entries are correlated, DP might fall short in terms of PML. We delve into this topic in the next section.

\section{Privacy for Correlated Databases: PML vs. DP}
\label{sec:main}
As discussed in Section~\ref{sec:intro}, the objective of our work is to understand the relation between PML and DP when $X$ is a database with correlated entries. It turns out that pure DP mechanisms can have poor privacy performance when assessed through the lens of PML. Before we formally state our result, we need to define the PML of a mechanism that releases an entry from the database without perturbation. 

Let $X = (D_1, \ldots, D_n)$ be a database. Given a distribution $P_X \in \cP_\cX$ over $X$ and $i \in [n]$, let $P_{D_i}$ denote the distribution of the $i$-th entry in a database, obtained by marginalizing $P_X$ over $P_{D_{-i}}$, described by~\eqref{eq:marginal}. We use
\begin{equation*}
    \varepsilon_\mathrm{max} (D_i) \coloneqq \log \frac{1}{\min\limits_{d \in \cD} \, P_{D_i}(d)},
\end{equation*}
 to denote the largest amount of information that can leak about $D_i$ through any mechanism. By \eqref{eq:pml_bounds}, $\varepsilon_\mathrm{max} (D_i)$ is equal to the PML of a mechanism that releases $D_i$ with no randomization.  

\begin{theorem}
\label{thm:main}
For each $\delta >0$ and $\varepsilon >0$ there exists a database $X = (D_1, \ldots, D_n)$, $i \in [n]$, a mechanism $P_{Y \mid X}$ satisfying $\varepsilon$-DP, and $y \in \bR$ such that
\begin{equation*}
    \ell(D_i \to y) > \varepsilon_\mathrm{max}(D_i) - \delta.
\end{equation*}
\end{theorem}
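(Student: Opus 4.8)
The plan is to turn the intuition behind Example~\ref{ex:bob} into a clean quantitative construction. Fix $\delta,\varepsilon>0$, put $\mathcal D=\{0,1\}$, and fix a noise level $\eta\in(0,1/2)$ (e.g.\ $\eta=1/4$). Let $D_1\sim\mathrm{Bernoulli}(1/2)$ and, conditioned on $D_1=d$, let $D_2,\dots,D_n$ be i.i.d.\ with $\bP(D_\ell=d\mid D_1=d)=1-\eta$ (and $D_\ell=1-d$ with probability $\eta$); the length $n$ is fixed large at the very end. Then $P_X$ has full support on $\mathcal D^n$ and $P_{D_1}$ is uniform, so $\varepsilon_\mathrm{max}(D_1)=\log 2$. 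For the mechanism take the counting query $f(x)=\sum_{\ell=1}^n x_\ell$; since $\mathcal D=\{0,1\}$ its $\ell_1$-sensitivity is $\Delta_1(f)=1$, so the associated Laplace mechanism $P_{Y\mid X}$ with scale $1/\varepsilon$ is $\varepsilon$-DP by Definition~\ref{def:laplace_mech}, irrespective of $P_X$. The outcome we single out is $y=n$, the largest value $f$ can take.

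The heart of the argument is to show that the posterior $P_{D_1\mid Y=n}$ concentrates on $D_1=1$ as $n\to\infty$. Because $f(x)\le n$, we have $\lvert n-f(x)\rvert=n-f(x)$, so $\bP(Y=n\mid D_1=d)=\tfrac{\varepsilon}{2}e^{-\varepsilon n}\,\bE[e^{\varepsilon f(X)}\mid D_1=d]$. Writing $f(X)=D_1+\sum_{\ell\ge2}D_\ell$ and using that $\sum_{\ell\ge2}D_\ell$ is $\mathrm{Binomial}(n-1,1-\eta)$ given $D_1=1$ and $\mathrm{Binomial}(n-1,\eta)$ given $D_1=0$, the binomial moment generating function gives $\bP(Y=n\mid D_1=1)=\tfrac{\varepsilon}{2}b^{\,n-1}$ and $\bP(Y=n\mid D_1=0)=\tfrac{\varepsilon}{2}e^{-\varepsilon}a^{\,n-1}$, where $a=\eta+(1-\eta)e^{-\varepsilon}$ and $b=(1-\eta)+\eta e^{-\varepsilon}$. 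Since $D_1$ is uniform and $P_Y(n)>0$, Bayes' rule yields $\bP(D_1=1\mid Y=n)=(1+\rho_n)^{-1}$ with $\rho_n=e^{-\varepsilon}(a/b)^{\,n-1}$. The only inequality actually needed is $a<b$, i.e.\ $(1-2\eta)(1-e^{-\varepsilon})>0$, which holds because $\eta<1/2$ and $\varepsilon>0$; hence $\rho_n\to0$.

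To conclude, apply Theorem~\ref{thm:pml} with the secret taken to be $D_1$ (whose law has full support on $\mathcal D$): $\ell(D_1\to n)=D_\infty(P_{D_1\mid Y=n}\Vert P_{D_1})=\log\bigl(2\max_{d\in\{0,1\}}\bP(D_1=d\mid Y=n)\bigr)$. Since $\rho_n\to0$, for all sufficiently large $n$ we have $\rho_n<\min\{1,\,e^{\delta}-1\}$; then the maximum above is attained at $d=1$, so $\ell(D_1\to n)=\log\tfrac{2}{1+\rho_n}>\log\tfrac{2}{e^{\delta}}=\log 2-\delta=\varepsilon_\mathrm{max}(D_1)-\delta$. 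Fixing such an $n$ and $i=1$ finishes the proof.

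I expect the second step to be the main obstacle: one has to commit to an outcome $y$ whose posterior is near-deterministic and push through the moment generating function estimate. Conceptually, the point is that a sensitivity-one query aggregates $n-1$ only weakly informative copies of $D_1$ into a statistic whose separation grows with $n$ and therefore swamps the fixed-variance Laplace noise — which is exactly why a single, seemingly unperturbed entry cannot by itself achieve this when $\varepsilon$ is small. If one additionally wants $\varepsilon_\mathrm{max}(D_1)$ itself to be large, the same scheme over $\mathcal D=\{0,\dots,k-1\}$, with each entry a noisy copy of a uniform $D_1$ and $y$ the maximal value of the counting query, gives $\varepsilon_\mathrm{max}(D_1)=\log k$ and $\ell(D_1\to y)\to\log k$ by identical estimates.
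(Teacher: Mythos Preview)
Your proof is correct. Both your argument and the paper's use the Laplace mechanism on a count/frequency query over a correlated binary database and show that the PML of $D_1$ at an extreme outcome approaches $\varepsilon_{\max}(D_1)$ as $n\to\infty$; in that sense the overall strategy is the same.

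Where you differ is in the correlation model and the bookkeeping. The paper takes $P_{D_1}(0)=\alpha<1/2$ and, conditioned on $D_1=d$, puts mass $\eta$ on the constant string $d^n$ and spreads the remaining mass uniformly over the other $2^n-1$ strings; it then sums the Laplace densities over all configurations and bounds the ratio directly at $y\le 0$, obtaining a lower bound that tends to $\log(1/\alpha)$. You instead make $D_2,\dots,D_n$ i.i.d.\ noisy copies of a uniform $D_1$, which lets the binomial moment generating function collapse the sum in one line and yields the exact closed form $\ell(D_1\to n)=\log\frac{2}{1+\rho_n}$ with $\rho_n=e^{-\varepsilon}(a/b)^{n-1}\to 0$. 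Your route is shorter and avoids the two ad hoc upper bounds the paper uses on $P_Y(y)$; the paper's route, with a general $\alpha$, directly gives arbitrarily large $\varepsilon_{\max}(D_1)=\log(1/\alpha)$ without the $k$-ary extension you sketch at the end. Either way the theorem as stated is fully established.
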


\begin{proof}
To prove the statement, we construct a binary database where the entries are strongly correlated: If one entry is zero (resp. one), then the other entries are also likely to be zero (resp. one). We then demonstrate that if we use the Laplace mechanism to answer the counting query on this database, then the resulting PML is significantly high.

Let $X = (D_1, \ldots, D_{n +1})$ be a database containing $n+1$ binary entries.\footnote{We consider a database of size $n+1$ instead of $n$ for notational convenience.} Suppose $P_{D_1}(0) = 1 - P_{D_1}(1) = \alpha$ with $0 < \alpha < 0.5$. Let $D_-=(D_2, \ldots, D_{n+1})$. Fix a constant $0 < \eta < 1$ and suppose the distribution of $D_-$ depends on $D_1$ as follows:  
\begin{gather*}
    P_{D_- \mid D_1=1}(d_-) = \begin{cases}
        \eta, &\quad \mathrm{if} \; d_- = 1^{n},\\
        \frac{1 - \eta}{2^{n} -1}, &\quad \mathrm{otherwise}.
    \end{cases}\\
    P_{D_- \mid D_1=0}(d_-) = \begin{cases}
        \eta, &\quad \mathrm{if} \; d_- = 0^{n},\\
        \frac{1 - \eta}{2^{n} -1}, &\quad \mathrm{otherwise}.
    \end{cases}
\end{gather*}
Suppose our goal is to release the empirical frequency of the ones in the database using the Laplace mechanism, i.e., $Y \mid X=x \sim \mathrm{Lap}(\frac{\norm{x}_1}{n+1}, b)$, where $\norm{x}_1$ denotes the $\ell_1$-norm of $x \in \{0,1\}^{n+1}$. Note that the empirical frequency has global sensitivity $\frac{1}{n+1}$, thus the Laplace mechanism with scale parameter $b = \frac{1}{\varepsilon (n+1)}$ satisfies $\varepsilon$-DP. However, here we show that the Laplace mechanism is insufficient for protecting $D_1$. To demonstrate this, we calculate the PML $\ell(D_1 \to y)$ with $y \leq 0$, which depends on the distributions $P_{Y \mid D_1=1}, P_{Y \mid D_1=0}$ and $P_{Y}$. 

First, we calculate $P_{Y \mid D_1=1}(y)$ assuming $y \leq 0$:
\begin{align*}
    &P_{Y \mid D_1=1}(y) = \sum_{d_- \in \{0,1\}^{n}} P_{Y \mid D_1=1, D_-=d_-}(y) \cdot P_{D_- \mid D_1=1}(d_-)\\
    &= \eta \; P_{Y \mid D_1=1, D_-= 1^n}(y) + \frac{1-\eta}{2^{n} -1} \sum_{d_- \in \{0,1\}^{n} \setminus 1^{n}} P_{Y \mid D_1=1, D_-=d_-}(y)\\
    &= \frac{\eta}{2b} \exp \left(- \frac{\abs{y - 1}}{b} \right) + \frac{1-\eta}{2b (2^{n} -1)} \sum_{d_- \in \{0,1\}^{n} \setminus 1^{n}} \exp \left(- \frac{\abs{y - \frac{1}{n+1} - \frac{\norm{d_-}_1}{n+1}}}{b} \right)\\
    &= \frac{\eta}{2b} \exp \left(- \frac{\abs{y - 1}}{b} \right) + \frac{1-\eta}{2b (2^{n} -1)} \cdot \sum_{i=0}^{n-1} \binom{n}{i} \exp \left(- \frac{\abs{y - \frac{1}{n+1} - \frac{i}{n+1}}}{b} \right)\\
    &= \frac{\eta}{2b} \exp \left(- \frac{\abs{y - 1}}{b} \right) + \\
    &\hspace{3em} \frac{1-\eta}{2b (2^{n} -1)} \bigg[ \sum_{i=0}^{n} \binom{n}{i} \exp \left(- \frac{\abs{y - \frac{1}{n+1} - \frac{i}{n+1}}}{b} \right) - \exp \left(- \frac{\abs{y -1}}{b} \right) \bigg]\\
    &= \frac{2^n \eta - 1}{2b (2^n - 1)} \exp \left(\frac{y - 1}{b} \right) + \\
    &\hspace{.3\textwidth} \frac{1-\eta}{2b (2^{n} -1)} \exp \left(\frac{y - \frac{1}{n+1}}{b}\right) \cdot \left(1 + \exp (- \frac{1}{b(n+1)}) \right)^n\\
    &= \frac{1}{2b (2^{n} -1)}\exp(\frac{y}{b}) \bigg[\Big(2^n \eta - 1\Big) \exp \left(-\frac{1}{b} \right) + \\
    &\hspace{.3\textwidth} (1-\eta) \, \exp \left(-\frac{1}{b(n+1)}\right) \left(1 + \exp (- \frac{1}{b(n+1)}) \right)^n \bigg].
\end{align*}

Similarly, we can calculate $P_{Y \mid D_1=0}(y)$ assuming $y \leq 0$:  
\begin{align*}
    &P_{Y \mid D_1=0}(y) = \sum_{d_- \in \{0,1\}^{n}} P_{Y \mid D_1=0, D_-=d_-}(y) \cdot P_{D_- \mid D_1=0}(d_-) \\
    &= \eta \; P_{Y \mid D_1=0, D_-= 0^n}(y) + \frac{1-\eta}{2^{n} -1} \; \sum_{d_- \in \{0,1\}^{n} \setminus 0^{n}} P_{Y \mid D_1=0, D_-=d_-}(y)\\
    &= \frac{\eta}{2b} \exp \left(\frac{y}{b} \right) + \frac{1-\eta}{2b (2^{n} -1)} \; \sum_{d_- \in \{0,1\}^{n} \setminus 0^{n}} \exp \left(\frac{y - \frac{\norm{d_-}_1}{n+1}}{b} \right)\\
    &= \frac{1}{2b} \exp \left(\frac{y}{b} \right) \left[\eta + \frac{1-\eta}{2^{n} -1}  \sum_{i=1}^{n} \binom{n}{i} \exp \left(- \frac{i}{b(n+1)} \right) \right] \\  
    &= \frac{1}{2b} \exp \left(\frac{y}{b} \right) \left[\eta - \frac{1-\eta}{2^{n} -1} + \frac{1-\eta}{2^{n} -1}  \sum_{i=0}^{n} \binom{n}{i} \exp \left(- \frac{i}{b(n+1)} \right) \right] \\  
    &= \frac{1}{2b (2^{n} -1)} \exp \left(\frac{y}{b} \right) \left[2^n  \eta - 1 + (1-\eta) \left(1 + \exp (- \frac{1}{b(n +1)}) \right)^n\right].  \\  
\end{align*}
Since $\exp(-x) \leq 1$ for $x \geq 0$, then $P_{Y \mid D_1=1}(y) \leq P_{Y \mid D_1=0}(y)$ when $y \leq 0$. Next, we calculate $P_Y(y)$ for $y \leq 0$: 
\begin{align*}
    &P_Y(y) = (1 - \alpha) P_{Y \mid D_1 = 1}(y) + \alpha P_{Y \mid D_1=0}(y) \\
    &= \frac{1}{2b (2^{n} -1)} \exp \left(\frac{y}{b} \right) \Bigg[\Big(2^n \eta - 1\Big) (1 - \alpha) \exp \left(-\frac{1}{b} \right) + \\
    &\hspace{7em} (1-\eta) (1 - \alpha)\, \exp \left(-\frac{1}{b(n+1)}\right) \left(1 + \exp (- \frac{1}{b(n+1)}) \right)^n + \\
    &\hspace{7em} \big(2^n \eta - 1\big) \alpha +  (1-\eta) \alpha \left(1 + \exp (- \frac{1}{b(n +1)}) \right)^n \Bigg]\\
    &\leq \frac{1}{2b (2^{n} -1)} \exp \left(\frac{y}{b} \right) \Bigg[\Big(2^n \eta - 1\Big) \Big( (1 - \alpha)\, \exp \left(-\frac{1}{b}\right) + \alpha \Big) + \\
    &\hspace{1em} (1-\eta) (1 - \alpha)\, \left(1 + \exp (- \frac{1}{b(n+1)}) \right)^n + (1-\eta) \alpha \left(1 + \exp (- \frac{1}{b(n +1)}) \right)^n \Bigg]\\
    &= \frac{1}{2b (2^{n} -1)} \exp \left(\frac{y}{b} \right) \Bigg[\Big(2^n \eta - 1\Big) \Big( (1 - \alpha)\, \exp \left(-\frac{1}{b}\right) + \alpha \Big) +\\
    &\hspace{18em}(1-\eta) \left(1 + \exp (- \frac{1}{b(n+1)}) \right)^n  \Bigg]\\
    &\leq \frac{1}{2b (2^{n} -1)} \exp \left(\frac{y}{b} \right) \Bigg[2^n  \eta \Big( (1 - \alpha)\, \exp \left(-\frac{1}{b}\right) + \alpha \Big) +\\
    &\hspace{18em} (1-\eta) \left(1 + \exp (- \frac{1}{b(n+1)}) \right)^n  \Bigg]. 
\end{align*}
Using $b(n+1) = \frac{1}{\varepsilon}$ to achieve $\varepsilon$-DP, we obtain the following lower bound on $\ell(D_1 \to y)$ with $y \leq 0$:
\begin{align*}
    \ell(D_1 \to y) &= \log \frac{\max\limits_{d_1 \in \{0,1\}} P_{Y \mid D_1 = d_1}(y)}{P_Y(y)} = \log \frac{P_{Y \mid D_1 = 0}(y)}{P_Y(y)}\\[0.5em]
    &\geq \log \frac{2^n \eta + \left(1 + e^{-\varepsilon} \right)^n (1-\eta) - 1}{2^n  \eta \Big( (1 - \alpha)\, \exp (-\varepsilon  n-\varepsilon) + \alpha \Big) + \left(1 + e^{-\varepsilon} \right)^n (1-\eta)}\\[0.5em]
    &= \log \frac{2^n  \eta +  \left(1 + e^{-\varepsilon} \right)^n \, (1-\eta) - 1}{2^n  \eta \, \alpha  + \left(\frac{2}{e^\varepsilon}\right)^n \eta (e^{-\varepsilon}) (1 - \alpha) +  \left(1 + e^{-\varepsilon} \right)^n \, (1-\eta) }.
 \end{align*}    

Note that $1 + e^{-\varepsilon} < 2$ and $\frac{2}{e^\varepsilon} < 2$ for all $\varepsilon > 0$. Therefore, when $n$ is large the dominating term in the numerator is $2^n \eta$ and the dominating term in the denominator is $2^n \eta \, \alpha$. Hence, as $n \to \infty$, the lower bound on $\ell(D_1 \to y)$ approaches $\varepsilon_\mathrm{max} (D_1) = \log \frac{1}{\alpha}$. This proves that for each $\delta >0$, there exists an integer $n$, a database $X$ of size $n$, and a mechanism satisfying $\varepsilon$-DP such that 
\begin{equation*}
    \ell(D_1 \to y) > \varepsilon_\mathrm{max}(D_1) - \delta.
\end{equation*}
\qed
\end{proof}

The proof of Theorem~\ref{thm:main} relies on a database exhibiting what may be considered as pathologically strong correlations: If the first entry is zero (resp. one) then all other entries are likely to be zero (resp. one) with a constant probability that does not diminish with growing database size $n$. However, it is important to note that the theorem holds true even in more realistic scenarios characterized by weaker correlations. Specifically, the asymptotic lower bound of $\varepsilon_\mathrm{max}(D_1)$ for PML remains applicable even if $\eta$ diminishes at a polynomial rate, i.e., if $\eta = \Theta(\frac{1}{n^r})$ for some constant $r \geq 1$.

\section{Conclusions}
\label{sec:discussion}
While previous research has hinted at the inadequacy of differential privacy in protecting databases with correlated entries, our analysis using pointwise maximal leakage offers a quantitative evaluation of the potential weaknesses in DP's guarantees. Our investigation illuminates the vulnerabilities arising from DP's distribution-agnostic definition, even in its strongest form, i.e., pure DP, which could foster a false sense of privacy if applied indiscriminately. In contrast, PML's ability to adjust to diverse scenarios due to its distribution-dependent nature provides a more nuanced approach to privacy preservation. In this way, our work highlights the necessity for further research aimed at developing a wide array of PML mechanisms tailored to specific contexts.

\begin{credits}
\subsubsection{\ackname} This work has been supported by the Swedish Research Council (VR) under the grant 2023-04787 and Digital Futures center within the collaborative project DataLEASH.

\subsubsection{\discintname}
The authors have no competing interests to declare that are
relevant to the content of this article.
\end{credits}
%
%
%
\bibliographystyle{splncs04}
\bibliography{main}

\end{document}